\pdfoutput=1
\documentclass[a4paper,USenglish,cleveref,autoref,thm-restate,pdfa]{lipics-v2021}
\hideLIPIcs
\nolinenumbers 
\usepackage[linesnumbered, ruled]{algorithm2e}
\title{Online and Incremental Fractional Vertex Cover\\ on Trees} 
\titlerunning{Online and Incremental Fractional Vertex Cover on Trees}
\author{Júlia Baligács}
{Department of Computer Science, University of Oxford, United Kingdom}
{jbaligacs@gmail.com}
{https://orcid.org/0000-0003-2654-149X}
{During employment in Warsaw supported by ERC grant BOBR (grant no.~948057).
During employment in Oxford supported by ERC grant CCOO (grant no.~101165139).}

\author{Bart{\l}omiej Bosek}
{Institute~of~Theoretical~Computer~Science, Faculty~of~Mathematics~and~Computer~Science, Jagiellonian~University, Krak{\'o}w, Poland}
{bartlomiej.bosek@uj.edu.pl}
{https://orcid.org/0000-0001-8756-3663}
{}

\author{Yann Disser}
{Department~of~Mathematics, TU Darmstadt, Germany}
{disser@mathematik.tu-darmstadt.de}
{https://orcid.org/0000-0002-2085-0454}
{}

\author{Andreas Emil Feldmann}
{Department of Computer Science, University of Sheffield, United Kingdom}
{feldmann.a.e+web@gmail.com}
{https://orcid.org/0000-0001-6229-5332}
{}

\author{Grzegorz Gutowski}
{Institute~of~Theoretical~Computer~Science, Faculty~of~Mathematics~and~Computer~Science, Jagiellonian~University, Krak{\'o}w, Poland}
{grzegorz.gutowski@uj.edu.pl}
{https://orcid.org/0000-0003-3313-1237}
{}

\author{Katarzyna Kępińska}
{Institute~of~Theoretical~Computer~Science, Faculty~of~Mathematics~and~Computer~Science, Jagiellonian~University, Krak{\'o}w, Poland}
{katarzyna.kepinska@alumni.uj.edu.pl}
{}
{}

\author{Paweł Putra}
{Institute of Informatics, University of Warsaw, Poland}
{p.putra@uw.edu.pl}
{https://orcid.org/0009-0006-3703-680X}
{}

\author{Anna Zych-Pawlewicz}
{Institute of Informatics, University of Warsaw, Poland}
{anka@mimuw.edu.pl}
{https://orcid.org/0000-0002-5361-8969}
{}

\authorrunning{Baligács, Bosek, Disser, Feldmann, Gutowski, Kępińska, Putra, Zych-Pawlewicz}
\Copyright{Júlia Baligács, Bart{\l}omiej Bosek, Yann Disser, Andreas Emil Feldmann, Grzegorz Gutowski, Katarzyna Kępińska, Pawe{\l} Putra, Anna Zych-Pawlewicz}

\ccsdesc[100]{Mathematics of computing~Discrete mathematics}
\ccsdesc[100]{Theory of computation~Online algorithms}
\keywords{fractional vertex cover, online algorithms, incremental algorithms, edge arrival model, trees, competitive analysis}

\funding{This\; work\; is\;  a\; result\; of\; research\; conducted\; within\;\; research\; project\; number\; 2023/51/B/ST6/02833 financed by the National Science Centre, Poland (Bartłomiej Bosek, Grzegorz Gutowski, Katarzyna Kępińska, Paweł Putra and Anna Zych-Pawlewicz)}
\acknowledgements{}

\usepackage{tikz}
\usetikzlibrary{arrows,math,patterns,shapes,fit,decorations.pathreplacing,shapes.geometric}
\usepackage{nicefrac}

\renewcommand{\epsilon}{\varepsilon}

\let\leq\leqslant
\let\geq\geqslant

\newcommand{\cred}{\texttt{red}}
\newcommand{\cblue}{\texttt{blue}}
\newcommand{\cgreen}{\texttt{green}}
\newcommand{\dr}{\Delta_{\cred}}
\newcommand{\dg}{\Delta_{\cgreen}}
\newcommand{\db}{\Delta_{\cblue}}

\newcommand{\alg}{\textsc{Alg}\xspace}
\newcommand{\opt}{\textsc{Opt}\xspace}
\newcommand{\compratio}{\rho}
\SetKwFor{AtTime}{at time}{do}{end}

\EventEditors{Philip Bille, Seth Pettie, and Sabine Storandt}
\EventNoEds{3}
\EventLongTitle{34th Annual European Symposium on Algorithms (ESA 2026)}
\EventShortTitle{ESA 2026}
\EventAcronym{ESA}
\EventYear{2026}
\EventDate{August 31--September 4, 2026}
\EventLocation{L'Aquila, Italy}
\EventLogo{}
\SeriesVolume{388}
\ArticleNo{158}

\begin{document}

\maketitle
\enlargethispage{\baselineskip}

\begin{abstract}
In this paper we study the fractional vertex cover problem on trees in two related models: online and incremental.
In the online model, the vertices of the tree are known a priori and the edges arrive one at a time. 
The goal is to maintain a fractional vertex cover of the tree, i.e., an assignment of fractional weights from $[0,1]$ to the vertices such that the weights of endpoints of every edge sum up to at least one. 
After each edge arrival, we need to modify the fractional vertex cover to cover the new edge as well. However, we can only increase the values assigned to vertices.
The problem was studied before (in the vertex arrival model) by Wang and Wong, who motivated it as a generalization of the ski-rental problem, but also (more importantly) by its close connection to the dual online matching problem.
They presented a $1.901$-competitive algorithm for general graphs in the vertex arrival model. 
We present an $\frac{11}{6} \approx 1.83$-competitive algorithm for trees in the more general edge arrival model. 

In addition, we study the fractional vertex cover problem in an incremental model, where we again seek a fractional vertex cover after every update, but all the updates to the tree are known to the algorithm a priori. 
In this model, we give a $1.5$-competitive algorithm and provide a matching lower bound.
\end{abstract}

\section{Introduction and Related Work}
In this paper we consider the fractional vertex cover problem in the online and in the incremental edge arrival model.
In this model, edges of an underlying graph appear one by one and we need to maintain a fractional vertex cover after each step.
Here, a fractional (resp.~integral) vertex cover is an assignment of weights from $[0,1]$ (resp.~from $\{0,1\}$) to the vertices so that every edge in the graph is \emph{covered}, i.e., such that the weights assigned to its endpoints sum to at least~$1$.
The objective is to minimize the total weight of the vertices, and the challenge is that our decisions are irrevocable in the sense that we may only ever increase the weights assigned to the vertices from one step to the next.
We cannot hope to maintain an optimal fractional vertex cover in each time step, instead we aim for solutions of small \emph{competitive ratio}, i.e., solutions that are never suboptimal by more than this ratio.

The online setting differs from the incremental setting in that it assumes no prior knowledge of the edges of the graph, while the incremental setting assumes that we know the graph a priori.
Both settings have in common that we need to define increasing sequences of vertex weights that simultaneously approximate a minimum fractional vertex cover across all time steps.
The incremental setting can be viewed as an online problem, where the only unknown is the number of steps that we need to optimize for (much like in the ski rental problem).
We provide online algorithms for both the online and the incremental setting in the edge arrival model for the case where the underlying graph is a tree.

The online fractional vertex cover problem was studied on general graphs by Wang and Wong~\cite{WangW2015}, but in the vertex arrival model, where vertices are revealed one at a time together with the adjacent edges to the previously revealed vertices.
This model is easier than the edge arrival model in the sense that vertex addition can be simulated by adding adjacent edges one by one.
Wang and Wong motivated the problem as a generalization of the ski-rental problem, and, more importantly, by its connection to the fractional online matching problem.
They present a $1.901$-competitive online algorithm maintaining a fractional vertex cover, with a proof based on primal-dual analysis. 
This means that the authors maintain an online solution to the dual fractional matching problem.
By weak duality, the maintained matching is $(1/1.901 \approx 0.526)$-competitive.
This lower bound was later matched by an algorithm by Tang and Zhang~\cite{TangZhang24},
which was later shown to be optimal by Tang~\cite{Tang2026}.

The existence of a $(1/2+\epsilon)$-competitive algorithm for the fractional matching problem on general graphs in vertex arrival model had been a long standing open question.
Thus, as a byproduct of their approach, Wang and Wong provide such an online algorithm. 
Building on this result, Gamlath et~al.~\cite{Gamlath} used a general rounding technique to transform an online fractional matching algorithm to an online randomized (integral) matching algorithm. 
They obtain a randomized $(\frac{1}{2}+\epsilon)$-competitive online matching algorithm for some small~$\epsilon > 0$. 
On the other hand, Gamlath et~al.~also show that, in the edge arrival model, no online algorithm (even randomized) can beat a competitive ratio of~$1/2$. 
This is an important result, since online matching is a very well-motivated and widely studied problem. 
A straightforward greedy algorithm is $1/2$-competitive for this problem, and a significant amount of research has been devoted to providing better ratios in different regimes and for different graph classes~\cite{KarpVV90,MahdianYan11,NaorWajc18,CohenWajc18,BuchbinderJainNaor07,LeeSingla20,BuchbinderSegevTkach19}.   

For forests, the online fractional matching problem was studied by Buchbinder et~al.~\cite{BuchbinderSegevTkach19}. 
They show an online $5/9$-competitive fractional matching algorithm which translates to a randomized integral algorithm with the same expected competitive ratio. 
Their analysis relies again on the primal-dual technique, where instead of maintaining a proper fractional vertex cover throughout the runtime of the algorithm, a new one is constructed retrospectively after each edge introduction.
The construction of those covers heavily depends on the parent--child relation between edge endpoints imposed by rooting each tree in the forest arbitrarily at each time step. 
Sadly this does not yield a $(9/5=1.8)$-competitive online fractional vertex cover, 
as merging two trees in the forest by introducing an edge between them and rooting this tree can effectively reverse the relationship between some edge endpoints.
This in turn leads to a vertex weight decreasing in the cover at some steps, which is not allowed in an online setting.
They also show an upper bound of $\approx0.5914$ on the competitive ratio of any fractional matching algorithms in this model. 
Online fractional matchings on forests were also studied by Jiang and Zhang~\cite{JiangZhang2026} in the free disposal model, in which any of the matched edges can later be removed permanently by the algorithm. 
They show a $(5/8 = 0.625)$-competitive algorithm for online fractional matching in this setting, showing a clear separation between competitive ratios of algorithms attainable for the two models.
Importantly the free disposal model allows them to maintain a single dual solution throughout the algorithm
by taking advantage of the model allowing non-incremental changes, which is not possible in our case.

To the best of our knowledge, the best known competitive ratio for fractional vertex cover on trees in the edge arrival model is the trivial ratio of $2$, as the bound of Wang and Wong~\cite{WangW2015} does not apply to the edge arrival model. 
In this paper, we improve the state of the art by presenting a $(\frac{11}{6} \approx 1.83)$-competitive online fractional vertex cover algorithm for trees.

We also present a $1.5$-competitive incremental fractional vertex cover algorithm for trees, as well as a matching lower bound for the problem.
The fractional vertex cover problem does not seem to have been previously considered from an incremental perspective.
Incremental minimization problems studied in the past include $k$-median~\cite{ChrobakKenyonNogaYoung08,MettuPlaxton03,LinNagarajanRajaramanWilliamson10}, facility location~\cite{Plaxton06,LinNagarajanRajaramanWilliamson10}, and $k$-center~\cite{Gonzalez85}.
In all of these problems, solutions are only constrained in their (growing) cardinality, otherwise every solution is feasible and the structure of the problem is encoded in the objective. 
In contrast, in the incremental vertex cover problem, we have a covering constraint in the sense that only solutions covering all the edges are permitted.
As noted by Lin et al.~\cite{LinNagarajanRajaramanWilliamson10}, the results for the minimum latency problem in~\cite{BlumChalasaniCoppersmithEtAl94,GoemansKleinberg98} implicitly yield results for the incremental $k$-MST problem, which is another example of an incremental minimization problem with a non-trivial feasibility constraint, namely that the solution must form a tree.

There is a substantial body of work on incremental \emph{maximization} problems, ranging from problems under a growing cardinality constraint~\cite{BernsteinDisserGrossHimburg22,DisserKlimmSchewiorWeckbecker23,DisserWeckbecker25} to growing knapsack~\cite{MegowMestre13,DisserKlimmMegowStiller17,DisserKlimmLutzWeckbecker24} and matroid~\cite{FujitaKobayashiMakino13,KakimuraMakino13} constraints.
Hassin and Rubinstein~\cite{HassinRubinstein02} considered the incremental maximum matching problem, which is related to our setting via the duality between matchings and vertex covers.
They show that the best possible competitive ratio for incremental matchings is~$\sqrt{2}$, which was later complemented by Matuschke, Skutella, and Soto~\cite{MatuschkeSkutellaSoto18} with a tight bound of $\ln(4) \approx 1.38$ for the randomized competitive ratio.
The setting was further generalized to matroid intersection~\cite{FujitaKobayashiMakino13} and independence systems with bounded exchangeability~\cite{KakimuraMakino13}.
That being said, these results do not seem transferable to (fractional) vertex cover. 

\section{Online Fractional Vertex Cover Algorithm}
In this section, we present our online fractional vertex cover algorithm for trees that we call RGB algorithm. Before we proceed, we more formally introduce the necessary definitions.

A \emph{fractional vertex cover}
of a graph $G=(V,E)$ is a function $f: V \to [0, 1]$ such
that, for every edge $(u,v)\in E$, we have $f(u) + f(v) \geq 1$.
The online edge arrival model assumes that
the edges of $G$ are revealed one at a time according to some \emph{presentation order} $e_1 \ll e_2 \ll \cdots \ll e_m$.
We denote by $G_t:=(V,\{e_1, \dots, e_t\})$ the graph after presenting all edges $e_i$ with $i \leq t$.
The task of an online algorithm is to produce a sequence $f_1, \ldots, f_m$ of fractional vertex covers such that:

\begin{enumerate}[(1)]
  \item For all $t \in \{1, \ldots, m\}$, $f_t$ is a fractional vertex cover for $G_t$.
  \item For all $t \in \{2, \ldots, m\}$ and $v\in V$, we have $f_{t-1}(v) \leq f_{t}(v)$, i.e., the covers are increasing.
\end{enumerate}

Importantly, the algorithm needs to output $f_t$ without any knowledge of $G_{t+1}, \dots, G_m$.
By $\opt_t$, we denote the weight of a minimum vertex cover in $G_t$.
For a fixed algorithm $\alg$, by $\alg_t := \sum_{v\in V} f_t(v)$, we denote the total weight of the fractional vertex cover generated by it at time $t$.
We say that the algorithm $\alg$ is~$\compratio$-competitive if $\max\{\frac{\alg_t}{\opt_t} \mid t \in [m]\} \leq \compratio$ for any input.
As the algorithm has no knowledge of the future, we can equivalently say that an algorithm is $\compratio$-competitive if $\alg_m\leq \compratio \cdot  \opt_m$ for every instance on $m$ edges.
When $t$ is clear from context we omit the subscript and also use $y_v$ as a shorthand of~$f_t(v)$.
If for an edge $(u, v)\in E$ it holds that $y_u + y_v \geq 1$, we say that $(u,v)$ is \emph{covered}.
If, on the other hand, $y_u + y_v < 1$, we say that $(u,v)$ is \emph{uncovered} with a \emph{deficit} of $d = 1 - y_u - y_v$.

A \emph{fractional matching} for a graph $G$ is an assignment of weights $m_{uv} \in [0, 1]$ to each edge~$(u, v) \in E$ such that for every $v \in V$ it holds that $\sum_{(u,v) \in E} m_{uv} \leq 1$, i.e., the sum of weights on edges incident to
any vertex does not exceed $1$. In the fractional matching problem, the objective is to maximize the sum of the values $m_{uv}$ assigned to the edges.

It is worth noting that on bipartite graphs (and in particular on trees) both the fractional vertex cover problem and the fractional matching problem admit optimal integral solutions.
In particular, $\opt_t$ will always be integral for our purposes.
We also note that it is very easy to obtain a competitive ratio of $2$ for online fractional vertex cover problem: we maintain a maximal matching and pick the endpoints of the matching edges to the vertex cover (i.e., assign them the values $1$).
Clearly, the size of a minimum vertex cover is at least the size of the matching, and hence $2$-competitiveness follows.

We now move on to defining the RGB algorithm.

\begin{definition}[RGB algorithm]\label{def:rgb}
The RGB algorithm colors each arriving edge of the graph by one of three available colors from the list $[\cred=1, \cgreen=2, \cblue=3]$.
On arrival of an edge $e$, it is assigned color $c_e$, which is either the first color from the list not present on its incident edges,
or $\cblue$ if no such color exists (see Algorithm~\ref{alg:rgb}).
Let $C(v)$ be a set of colors present on the edges adjacent to $v$. The weight of a vertex in the cover is assigned according
to the colors of adjacent edges in the following way
\begin{equation*}
f(v)=
\begin{cases}
1 \text{ if } C(v)=\{ \cred, \cgreen, \cblue \}\\
\frac{5}{6} \text{ if } C(v)=\{ \cred, \cgreen \}\\
\frac{4}{6} \text{ if } C(v)=\{ \cred, \cblue \}\\
\frac{2}{6} \text{ if } C(v)=\{ \cgreen, \cblue \}\\
\frac{3}{6} \text{ if } C(v)=\{ \cred \}\\
\frac{1}{6} \text{ if } C(v)=\{ \cgreen \}\\
0 \text{ if } C(v)=\{ \cblue \} \text{ or } C(v)=\emptyset.
\end{cases}
\end{equation*}
\end{definition}

\begin{algorithm}
\caption{RGB Algorithm}
  \label{alg:rgb}
  \KwIn{Online graph \( G = (V, E) \) with offline vertices \( V \)}
  \KwOut{A fractional vertex cover of \( G \)}
  Initialize \( y_u \gets 0 \) for each \( u \in V \)\\
  \For{each online edge \( e=(u, v) \)}{
    $c_e \gets \min{\{\{ i \in \{1, 2\} \mid u,v \text{ are not adjacent to any edge of color } i \} \cup \{3\}\}}$\\
    $y_u \gets f(u)$\\
    $y_v \gets f(v)$
  }

  \Return $(y_v)_{v \in V}$
\end{algorithm}

We start the analysis by proving the correctness of RGB algorithm.

\begin{lemma}
  The RGB Algorithm maintains a fractional vertex cover after every edge arrival.
\end{lemma}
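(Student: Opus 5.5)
Two things need to be verified: the weights never decrease over time, and every edge present is covered. Monotonicity comes first. Colors are assigned once and never change, so $C(v)$ only grows over time. It therefore suffices to check that $f$ is monotone under set inclusion on the subsets of $\{\cred,\cgreen,\cblue\}$. Read off the definition, this is a finite check: $\emptyset,\{\cblue\}\mapsto 0$; $\{\cgreen\}\mapsto\frac16$; $\{\cgreen,\cblue\}\mapsto\frac26$; $\{\cred\}\mapsto\frac36$; $\{\cred,\cblue\}\mapsto\frac46$; $\{\cred,\cgreen\}\mapsto\frac56$; the full set $\mapsto 1$. Every single-color addition goes up or stays level, so monotonicity holds. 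Consequently, once an edge is covered it stays covered, and it is enough to show that each edge $e=(u,v)$ is covered right after its own arrival, when $c_e\in C(u)\cap C(v)$.

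The proof then splits by the color of $e$. If $c_e=\cred$, both endpoints contain $\cred$, and every set containing $\cred$ has weight at least $\frac36$, so $y_u+y_v\ge 1$.

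If $c_e=\cgreen$, then $\cred$ was unavailable, so at least one endpoint, say $u$, already had a red edge. Hence $C(u)\supseteq\{\cred,\cgreen\}$ and $y_u\ge\frac56$. Since $\cgreen\in C(v)$, we have $y_v\ge\frac16$, and the sum is at least $1$.

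If $c_e=\cblue$, both red and green were unavailable, so some endpoint has red and some endpoint has green. This case is the main obstacle and needs a subcase analysis. If one endpoint has both, it has all three colors and weight $1$. Otherwise, up to symmetry, $u$ has red and $v$ has green, and both have blue. Then $C(u)\supseteq\{\cred,\cblue\}$ gives $y_u\ge\frac46$, and $C(v)\supseteq\{\cgreen,\cblue\}$ gives $y_v\ge\frac26$, so the sum is at least $1$.

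Notably, the argument never uses that the graph is a tree; it holds for any graph.
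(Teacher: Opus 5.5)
Your proof is correct and follows the paper's argument: the same case split on $c_e\in\{\cred,\cgreen,\cblue\}$, with the blue case divided according to whether one endpoint carries both red and green. The one addition is your explicit check that $f$ is monotone under inclusion of color sets; the paper leaves this implicit, but it is what guarantees that earlier edges stay covered and that weights never decrease.
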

\begin{proof}
  We need to prove that after introducing an online edge $e = (u,v)$ it holds that $y_u + y_v \geq 1$. We make case distinction for each possible color assigned to the new edge $e$:

\begin{itemize}
  \item $c_e = \cred$: Both $u$ and $v$ are incident to a $\cred$ edge, implying that $y_u, y_v \geq \frac{1}{2} $, so $ y_u + y_v \geq 1$.
    
  \item $c_e = \cgreen$: Since $e = (u, v)$ got assigned \cgreen, it means that at least one of $u, v$ must be adjacent to a $\cred$ edge. Assume without loss of generality that it is $u$, then $\{\cred, \cgreen\} \subseteq C(u)$ giving $y_u \geq \frac{5}{6}$, and $\cgreen \in C(v)$ giving $y_v \geq \frac{1}{6}$, so $ y_u + y_v \geq 1$
    
  \item $c_e = \cblue$: This means that at least one of $u, v$ is adjacent to a $\cred$ edge and at least one is adjacent to a $\cgreen$ edge.
    There are two cases:
    \begin{itemize}
      \item Case 1: At least one of $u, v$ is adjacent to both a $\cred$ and a $\cgreen$ edge. Assume without loss of generality that $u$ is such a vertex. Then $\{\cred, \cgreen, \cblue\} \subseteq C(u)$ giving $y_u \geq 1$, so $ y_u + y_v \geq 1$.
      \item Case 2: Both $u$ and  $v$ are adjacent to exactly one of red and green. Assume without loss of generality that $u$ is adjacent to $\cred$ and $v$ is adjacent to $\cgreen$. Then we have $\{\cred, \cblue\} \subseteq C(u)$, giving $y_u \geq \frac{4}{6}$ and $\{\cgreen, \cblue\} \subseteq C(v)$ giving $y_v \geq \frac{2}{6}$, so $ y_u + y_v \geq 1$
    \end{itemize}

\end{itemize}

\end{proof}

Next, we prove that the RGB algorithm achieves the desired competitive ratio.

\begin{theorem}
  The RGB Algorithm is $\frac{11}{6}$-competitive on forests.
\end{theorem}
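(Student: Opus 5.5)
The plan is to compare the total weight against a lower bound on $\opt$ given by a large matching. On trees, $\opt$ equals the size of a maximum matching of the current forest $G_t$, so it suffices to exhibit a matching $M$ with $\alg_t \leq \frac{11}{6}|M|$. Since the algorithm's behaviour at time $t$ depends only on $G_t$ and the colouring, it is enough to prove this for the final graph. I will not use any earlier result beyond the structural facts below, which follow from the colouring rule.

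\textbf{Structural facts.}
\begin{enumerate}[(1)]
\item Red edges form a matching $R$. An edge is red exactly when neither endpoint already has a red edge, so $R$ is a greedy maximal matching.
\item Green edges also form a matching $G'$. An edge is green only if neither endpoint has a green edge.
\item Every green edge has at least one endpoint covered by $R$, because green is chosen only when red is unavailable.
\item A blue edge $(u,v)$ appears only if, at its arrival, red and green are each present at $u$ or at $v$.
\item A vertex with $C(v)\subseteq\{\cblue\}$ has weight $0$.
\end{enumerate}
By (1) and (2), the subgraph $R\cup G'$ has maximum degree $2$ and is acyclic, since it lies in a forest. Hence it is a disjoint union of paths $P_1,\dots,P_s$ whose edges alternate red and green. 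Only vertices on these paths carry weight.

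\textbf{Accounting per path.} First I account for each path $P$ with $k$ edges against its own maximum matching, of size $\lceil k/2\rceil$. The $k-1$ internal vertices have colour set $\{\cred,\cgreen\}$ or all three colours, so weight $\frac56$ or $1$. Each endpoint has weight $\frac36$ or $\frac46$ if it is red-only, and $\frac16$ or $\frac26$ if it is green-only; in each pair the larger value occurs exactly when a blue edge is attached.

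A direct computation shows the following. Without blue edges, every path has weight at most $\frac{11}{6}\lceil k/2\rceil$; for example, a red--green path of length $2$ has weight $\frac56+\frac36+\frac16=\frac32$. The ratio $\frac{11}{6}$ can only be exceeded through the blue surcharges. These surcharges are $+\frac16$ per internal vertex and $+\frac16$ per endpoint that is incident to a blue edge. An example is the path red--green where all vertices see blue, with weight $1+\frac46+\frac26=2$.

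\textbf{Handling blue edges (main obstacle).} The hard step is to pay for the blue surcharges. The blue edges live between or inside paths, and since the graph is a forest, contracting each path $P_i$ to a node turns the blue edges into a forest $F$ on the paths. My plan is to root each tree of $F$ and give each path at most one parent blue edge. I would then charge every $+\frac16$ surcharge at a vertex either to that vertex's path itself, when the path has slack because $k$ is odd or an endpoint is green-only, or across the parent blue edge.

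Fact (4) says that a blue edge always joins a vertex that already had red or green to another such vertex. The key claim to prove is then an augmentation statement: whenever a path is ``tight'' (has no slack), a blue edge at one of its endpoints lets us modify the matching, by rematching along the path plus that blue edge, to gain enough to cover the surcharges. Since each path has at most one parent edge, the gains do not collide. I expect the case analysis of which endpoint colours a blue edge can join, and checking that every tight configuration is rescued by an augmentation, to be the bulk of the proof. The constant $\frac{11}{6}$, as opposed to something smaller, should come out as the worst case of that analysis, for instance a length-$2$ path whose augmentation only gains $\frac16$.
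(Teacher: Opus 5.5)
\textbf{There is a genuine gap.} Your structural facts (1)--(5) and the per-path bookkeeping are correct. However, the proposal stops exactly where the proof has to happen. Paying for the blue surcharges is delegated to an ``augmentation statement'' that you neither formulate precisely nor prove, so as written nothing beyond the blue-free case is established.

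The one justification you give, that each path has at most one parent edge so the gains do not collide, does not hold. A path has one parent edge but may have arbitrarily many child blue edges, several even at the same vertex (an internal vertex with many pendant leaves). Any augmentation through a blue edge must use the vertex on the other side, which is then unavailable to that side's path. Siblings hanging off a common parent path therefore compete for its vertices, and a path can release at most one vertex (and only if its length is even) without losing one of its own matching edges.

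Nor can the surcharges be routed only across parent edges. A path with $k$ edges ($k$ even) all of whose vertices see blue has weight $k$ against its own budget $\frac{11}{6}\cdot\frac{k}{2}$. That excess of $\frac{k}{12}$ cannot be shipped across one parent edge, so most of it must be paid through child edges, which is precisely where the collisions occur. Your remark that an augmentation ``only gains $\frac16$'' also shows the mechanism is not set up quantitatively. An integral augmentation gains a whole edge, and the real question, how to share that gain among competing paths, is left open.

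\textbf{The missing idea} is to abandon integral matchings and augmentation and instead build a \emph{fractional} matching by purely local discharging, as the paper does. Root each tree. Every vertex $v$ pushes $y_v$ onto:
\begin{itemize}
\item its red edge, at most $\frac46$, or $\frac36$ if that edge goes to $v$'s parent;
\item its green edge, at most $\frac26$, or $\frac16$ toward the parent;
\item one blue edge, at most $\frac16$ toward a child and nothing toward the parent.
\end{itemize}
Because $v$ has at most one parent, these capacities cover $y_v$ in every colour case.

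Each red edge then carries at most $\frac46+\frac36$ and each green edge at most $\frac26+\frac16$. Each blue edge carries at most $\frac16$, and only from its parent end. So every vertex meets at most one charged blue edge: if its parent edge is blue, its red and green edges go to children and absorb all of $y_v$.

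Every vertex therefore has load at most $\frac{11}{6}$. Scaling by $\frac{6}{11}$ gives a fractional matching of value $\frac{6}{11}\alg_t$, and weak duality finishes. This parent/child asymmetry replaces your non-collision claim: it keeps the charging local, so no global rematching is ever needed.
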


\begin{proof}
  For a fixed moment $t$ we will show how to transform the total weight of the cover $\alg_t$ produced up to this moment into a fractional matching $M$ of weight $\frac{6}{11}\alg_t$.
  The result will follow from weak duality, as the vertex cover LP is the dual of the matching LP (see for instance~\cite{DevanurJainKleinberg13}).
  
  Let $m_{uv}$ denote the weight discharged onto the edge $(u,v)$.
  To create the matching we first root each connected component at an arbitrary vertex.
  Then for each vertex $v$ we discharge the weight $y_v$ by doing the following.

  \begin{enumerate}
    \item If vertex $v$ has a red edge $(u_r, v)$, then increase $m_{u_r v}$ by $\dr$ and decrease $y_v$ by $\dr$, where $\dr = \min(y_v, \frac{4}{6} - \frac{1}{6}[u_r \text{ is a parent of } v])$.
      
    \item If vertex $v$ has a green edge $(u_g, v)$, then increase $m_{u_g v}$ by $\dg$ and decrease $y_v$ by $\dg$, where $\dg = \min(y_v, \frac{2}{6} - \frac{1}{6}[u_g \text{ is a parent of } v])$.

    \item If vertex $v$ has at least one blue edge, fix one such edge $(u_b, v)$, increase $m_{u_b v}$ by $\db$ and decrease $y_v$ by $\db$, where $\db = \min(y_v, \frac{1}{6} - \frac{1}{6}[u_b \text{ is a parent of } v])$.
  \end{enumerate}

  To prove that this is enough to discharge the entirety of $y_v$, it suffices to show that the sum of discharged amounts is at least $y_v$ for all possible sets $C(v)$ of colors incident to $v$. 
  For $C(v) = \{\cred, \cgreen, \cblue\}$, by summing the discharged amounts over all steps, we obtain that the discharged amount is at least
  \begin{equation*}
     \min\left(y_v, \frac{4}{6} + \frac{2}{6} + \frac{1}{6} - \frac{1}{6}[\text{some } u \in \{u_r, u_g, u_b\} \text{ is a parent of } v] \right)=
     \min\left(y_v, \frac{6}{6}\right) = y_v,
  \end{equation*}
  as $y_v \leq 1$ and each vertex $v$ has at most one parent.
  For $C(v)=\{ \cred, \cgreen \}$ we get
  \begin{equation*}
     \min\left(y_v, \frac{4}{6} + \frac{2}{6} - \frac{1}{6}[\text{some } u \in \{u_r, u_g\} \text{ is a parent of } v] \right)=
     \min\left(y_v, \frac{5}{6}\right) = y_v,
  \end{equation*}
  as $y_v \leq \frac{5}{6}$. For $C(v)=\{ \cred, \cblue \}$ we get
  \begin{equation*}
     \min\left(y_v, \frac{4}{6} + \frac{1}{6} - \frac{1}{6}[\text{some } u \in \{u_r, u_b\} \text{ is a parent of } v] \right)=
     \min\left(y_v, \frac{4}{6}\right) = y_v,
  \end{equation*}
  as $y_v \leq \frac{4}{6}$. For $C(v)=\{ \cgreen, \cblue \}$ we get
  \begin{equation*}
     \min\left(y_v, \frac{2}{6} + \frac{1}{6} - \frac{1}{6}[\text{some } u \in \{u_g, u_b\} \text{ is a parent of } v] \right)=
     \min\left(y_v, \frac{2}{6}\right) = y_v,
  \end{equation*}
  as $y_v \leq \frac{2}{6}$.   For $C(v)=\{ \cred \}$ we get
  \begin{equation*}
     \min\left(y_v, \frac{4}{6} - \frac{1}{6}[u_r \text{ is a parent of } v] \right)=
     \min\left(y_v, \frac{3}{6}\right) = y_v,
  \end{equation*}
  as $y_v \leq \frac{3}{6}$. For $C(v)=\{ \cgreen \}$ we get
  \begin{equation*}
     \min\left(y_v, \frac{2}{6} - \frac{1}{6}[ u_g \text{ is a parent of } v] \right)=
     \min\left(y_v, \frac{1}{6}\right) = y_v,
  \end{equation*}
  as $y_v \leq \frac{1}{6}$. For $C(v)=\{ \cblue \}$ we have
$y_v =0$ so there is nothing to discharge.
  Let $m_v$ denote the sum of weights of edges incident to vertex $v$.
  Clearly $v$ can be adjacent to at most one $\cred$ and at most one $\cgreen$ edge.
Additionally, a vertex may be adjacent to multiple $\cblue$ edges, but only one of them can have a positive weight. 
To prove this, consider two cases:
    \begin{itemize} 
      \item (1) $v$ has a $\cblue$ edge to its parent $p$: In this case both $\cred$ and $\cgreen$ edges of $v$ point to its children, so the entirety of $y_v$ will be discharged onto them. This means that the only $\cblue$ edge with positive charge can be the edge $(v, p)$, as $\cblue$ edges can only get charge from their parent vertex. 

      \item (2) $v$ has no $\cblue$ edge to its parent: Since $\cblue$ edges can only get charge from their parent vertex, and $v$ charges only one $\cblue$ edge, there is only one $\cblue$ edge with positive charge incident to $v$. 
    \end{itemize}

  Summing the contributions of edges around $v$ we get
$$
    m_v \leq \underbrace{\frac{4}{6} + \frac{3}{6}}_{\text{$\cred$}} 
           + \underbrace{\frac{2}{6} + \frac{1}{6}}_{\text{$\cgreen$}} 
           + \max\Bigl(
            \underbrace{\frac{1}{6} + 0}_{\text{$\cblue$ (1)}},
            \underbrace{0 + \frac{1}{6}}_{\text{$\cblue$ (2)}}
            \Bigr)
           = \frac{7}{6} + \frac{3}{6} + \frac{1}{6} = \frac{11}{6},
$$ 
  where under-braced parts show the contributions of each edge as a sum of contribution of an edge of given color from its parent and child end in the tree.
  This gives us that $M = \left\{\frac{6}{11}m_{uv} | (u,v) \in E\right\}$ is a fractional matching.

  Let $|M^*|$ be the weight of a maximum fractional matching. From weak duality we get
  $$ \frac{6}{11}\alg = |M| \leq |M^*| = \opt$$
  $$ \alg \leq \frac{11}{6}\opt $$

\end{proof}

\section{Incremental Vertex Cover}

\newcommand{\twochild}{\ensuremath{\textsc{TwoChildRule}}\xspace}

In this section, we study a variant of the online vertex cover problem in which the algorithm has full knowledge of the future, and the objective is to maintain a good solution at all times.
The motivation for this model is the following. Consider the process of building a road network. During construction, the site must be monitored by security cameras. Given the full construction plan, we must decide where and when to install cameras. Once installed, they cannot be moved, and at every point in time we aim to minimize the total number of cameras in use.

We formalize this setting as the \emph{incremental vertex cover} problem.
Similar to the online setting, 
we are given as input a graph~$G=(V,E)$ together with an ordering of its edges $e_1 \ll \dots \ll e_m$. The goal is to compute an increasing sequence $f_1, \ldots, f_m$ of fractional vertex covers such that:

\begin{enumerate}[(1)]
  \item For all $t \in \{1, \ldots, m\}$, $f_t$ is a fractional vertex cover for $G_t=(V,\{e_1, \dots, e_t\})$.\label{cond:frac_vertex_cover}
  \item For all $t \in \{2, \ldots, m\}$ and $v\in V$, we have $f_{t-1}(v) \leq f_{t}(v)$, i.e., the covers are increasing.\label{cond:increasing}
\end{enumerate}

We denote $f_t(v)$ as $y_v^{(t)}$.
The key difference to the online setting is that the algorithm has full knowledge of $G$ and the ordering of the edges $e_1 \ll \dots \ll e_m$.
The challenge here is to compute a good solution for all times.
As in the online setting, we are seeking algorithms with a small competitive ratio, which is defined by 
$
\max_{t \in [m]} \frac{\alg_t}{\opt_t}.
$

In this section, we focus on the fractional incremental vertex cover problem on forests. We show that the optimal competitive ratio is exactly $3/2$. Specifically, we present an algorithm achieving this ratio and a matching lower bound. The algorithm is half-integral and follows a simple local rule.
We conjecture that the same ratio of $3/2$ is achievable in the integral setting.
However, we construct an example showing that no purely local integral rule can achieve a competitive ratio of $3/2$, illustrating that the integral case is substantially more challenging.

Our lower bound construction is based on a simple instance -- a binary tree revealed layer by layer. Despite its simplicity, the analysis is technically involved, as it must capture the behavior of an arbitrary fractional algorithm. The key idea is to derive constraints on the average weight assigned to each layer over time and to use LP duality to show that no algorithm satisfying these constraints can achieve a competitive ratio better than $3/2$.

\subsection{A 1.5-competitive half-integral algorithm}

In this subsection, we introduce a half-integral algorithm for incremental vertex cover, called $\twochild$, and prove that it achieves a competitive ratio of $3/2$.

Before describing the algorithm, let us first establish a useful property.
Observe that larger instances are more difficult to solve in the following sense: 
For a forest $F$, let $\compratio(F)$ denote the best achievable competitive ratio over all instances defined on $F$. If $F$ is a subgraph of $F'$, then $\compratio(F) \leq \compratio(F')$: Indeed, any instance on $F$ can be embedded into $F'$ by revealing all edges in $E(F)$ first and postponing all edges in $E(F') \setminus E(F)$ to the end.
As a consequence, we can assume wlog.~that $F$ is a tree (instead of only a forest),
for which we can pick a root, and we can assume that every vertex has either at least two children or is a leaf.

The algorithm $\twochild$ is formally described in Algorithm~\ref{alg:leftfirst}. Throughout the description, we use the convention that whenever $y_v^{(t)}$ is not explicitly specified, it remains unchanged, i.e., $y_v^{(t)} = y_v^{(t-1)}$.
The algorithm exploits the parent--child structure of the tree and follows a simple local rule. Vertices with two children are eventually assigned weight~$1$. When an edge $e=(p,c)$ arrives, where $c$ is the first child of $p$, the algorithm looks ahead to the next time step at which either $p$ or $c$ receives an additional child. If $p$ receives another child before $c$ does, then $p$ is assigned weight $1$. Otherwise, both $p$ and $c$ are assigned weight~$1/2$.

\begin{algorithm}
\setcounter{AlgoLine}{0}
\caption{$\twochild$}
\label{alg:leftfirst}
root the final tree arbitrarily\;
initialize $y_v^{(0)}=0$ ($v\in V$)\;
\AtTime{$t = 1,\ldots,m$}{
    let $e_t=(p,c)$ where $p$ is the parent of $c$\;
	\If{$y_p^{(t-1)}+y_c^{(t-1)}<1$}{
    \If{$p$ has two children or ($p$ gets another child before $c$ and $y_c^{(t-1)}=0)$}{
        $y_p^{(t)} \gets 1$\;
    }
    \Else{
        $y_p^{(t)} \gets \frac{1}{2}$,
        $y_c^{(t)} \gets \frac{1}{2}$
    }
}}
\end{algorithm}

\begin{observation}
\twochild is correct, that is, $(y_v^{(t)})_{v\in V, t \in [m]}$ satisfies conditions (\ref{cond:frac_vertex_cover}) and (\ref{cond:increasing}).
\end{observation}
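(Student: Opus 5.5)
The plan is a direct induction on $t$. The invariant is that after step $t$ all values $y_v^{(t)}$ lie in $\{0,\frac12,1\}$ and that $(y_v^{(t)})_{v\in V}$ is a fractional vertex cover of $G_t$. For $t=0$ this holds trivially: every value is $0$ and $G_0$ has no edges. At step $t$ with $e_t=(p,c)$, either the outer condition $y_p^{(t-1)}+y_c^{(t-1)}<1$ fails or it holds, and I would treat the two cases separately.

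\textbf{Monotonicity, condition (\ref{cond:increasing}).} If the outer condition fails, nothing changes. Otherwise the only updates are assignments of $1$ or of $\frac12$. Assigning $1$ never decreases a value in $[0,1]$. The one point needing care is the \texttt{else} branch, where $p$ and $c$ are both set to $\frac12$. Here the guard $y_p^{(t-1)}+y_c^{(t-1)}<1$ is used. By the invariant both values lie in $\{0,\frac12,1\}$, so neither can equal $1$, since otherwise the sum would be at least $1$. Hence both lie in $\{0,\frac12\}$, and setting them to $\frac12$ does not decrease them. The new values are again in $\{0,\frac12,1\}$, which preserves the invariant. This shows $y_v^{(t-1)}\le y_v^{(t)}$ for all $v$.

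\textbf{Feasibility, condition (\ref{cond:frac_vertex_cover}).} Every edge $e_i$ with $i<t$ was covered by $y^{(t-1)}$ by induction. Since values never decrease, it remains covered by $y^{(t)}$. For the new edge $e_t$ there are three possibilities. If the outer condition fails, $e_t$ is already covered. If the first inner branch is taken, then $y_p^{(t)}=1$. If the \texttt{else} branch is taken, then $y_p^{(t)}+y_c^{(t)}=\frac12+\frac12=1$. In all three cases $e_t$ is covered, so $y^{(t)}$ is a fractional vertex cover of $G_t$.

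\textbf{Main obstacle.} The argument does not depend on how the look-ahead condition in the inner \texttt{if} is evaluated, or on whether ``two children'' refers to the final tree or to $G_t$. Either branch covers $e_t$ and is monotone, so the lookahead plays no role in correctness. The only step I expect to need care is monotonicity in the \texttt{else} branch. It relies on half-integrality, which must be carried along in the induction: it rules out a current value strictly between $\frac12$ and $1$ being lowered to $\frac12$.
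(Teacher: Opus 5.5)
Your proof is correct and follows the same route as the paper. Both arguments note that the new edge is covered by construction. Both then reduce monotonicity to showing that a vertex of weight $1$ is never reset to $\frac12$, which holds because any edge touching such a vertex fails the guard $y_p^{(t-1)}+y_c^{(t-1)}<1$. Your explicit half-integrality invariant, and your remark that earlier edges stay covered by monotonicity, make precise what the paper's short argument uses implicitly.
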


\begin{proof}
By construction, the weights $(y_v^{(t)})$ cover the edge $e_t$ at every time step $t$.
Therefore, it only remains to check that no weight is ever decreased.
Since no weight is set to $0$, it suffices to ensure that no weight is changed from $1$ to $\frac12$.
  However, once a vertex has weight~$1$, all future edges incident to it are already covered, and hence no further updates involving this vertex are performed.
\end{proof}

We now proceed with proving that \twochild achieves a competitive ratio of $3/2$.
As a first step, we establish the following simple properties.

\begin{observation}\label{obs:upper_bound}
Fix a time step $t$ and consider the graph $G_t=(V, \{e_1, \dots, e_t\})$ with the vertex weights $(y_v^{(t)})_{v\in V}$ defined by $\twochild$.
\begin{enumerate}[(a)]
  \item If $u$ has no children and $y_u^{(t)} \neq 0$, then $y_u^{(t)}=\frac12$, $u$ has no siblings and $y_p^{(t)}=\frac12$ as well for the parent $p$ of $u$.\label{obs:leaf}
    \item If a vertex $u$ has exactly one child $c$ and $y_u^{(t)}=1$, then $y_c^{(t)}=0$.\label{obs:one_child}
\end{enumerate}
\end{observation}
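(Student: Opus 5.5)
Both parts follow from tracking which update of \twochild can give a vertex a nonzero weight. The basic facts are these. Weights only take values in $\{0,\frac12,1\}$. A vertex's weight is changed only at a time step $s$ at which $e_s$ is incident to it and, just before, $y_p^{(s-1)}+y_c^{(s-1)}<1$ for $e_s=(p,c)$. In every such step the child endpoint $c$ is either left unchanged or set to exactly $\frac12$. Throughout, ``children'' of $u$ means the children of $u$ in the final rooted tree that are adjacent to $u$ in $G_t$.

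For (\ref{obs:leaf}), suppose $u$ has no children in $G_t$ and $y_u^{(t)}\neq 0$. Every edge incident to $u$ in $G_t$ goes to a child, and there are none, so the only edge incident to $u$ is $(p,u)$, where $p$ is its parent, arriving at some time $s\le t$. Weight can only have been placed on $u$ at that step, as the child endpoint, so we were in the else-branch and $y_u^{(s)}=y_p^{(s)}=\frac12$. The value $y_u$ never changes afterwards, since no other edge of $G_t$ touches $u$, which gives $y_u^{(t)}=\frac12$.

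Entering the else-branch requires two things: $p$ did not have two children at time $s$, and ``$p$ gets another child before $c$'' failed or $y_u^{(s-1)}\neq0$. Since $y_u^{(s-1)}=0$, it must be that $p$ does not get another child before $u$ does.
\begin{itemize}
\item Siblings. Suppose $u$ had a sibling $w$ in $G_t$. It cannot have arrived before $s$: then $(p,u)$ would be $p$'s second child edge, we would have $y_p=1$, and $u$ would stay at $0$, a contradiction. So $(p,w)$ arrives in $(s,t]$, while $u$ gains no child by time $t$. Then $p$ gets another child before $u$ does, contradicting the branch taken.
\item Value of $y_p$. By the same argument, $p$ gains no new child in $(s,t]$, so $p$ has exactly the child $u$ in $G_t$. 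The remaining edge at $p$ is the one to its own parent $q$. If that edge arrives at some $s'\in(s,t]$, the pair $(q,p)$ has $y_p=\frac12$. Here I must check that the update can only set $q$ to $1$ or to $\frac12$, and never raises $p$, because $p$ is the child endpoint and gets $\frac12$, which it already has. So $y_p^{(t)}=\frac12$.
\end{itemize}
Part (\ref{obs:leaf}) hinges on correctly reading the lookahead condition, which concerns future edges of the final tree.

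For (\ref{obs:one_child}), let $u$ have exactly one child $c$ in $G_t$ with $y_u^{(t)}=1$. As the child endpoint in the edge to its parent, $u$ can only be set to $\frac12$, so $y_u$ became $1$ as the parent endpoint of $(u,c)$, at its arrival time $s$, in the if-branch. Since $u$ then had only one child, the condition used must have been ``$u$ gets another child before $c$ and $y_c^{(s-1)}=0$.'' After step $s$, $c$ can only receive weight through edges to its own children. I expect this to be the main obstacle: I must show that no child of $c$ appears by time $t$. If one did, $c$ would receive its next child before $u$ does, since $u$ receives none by time $t$. That contradicts the lookahead, which said $u$ gets another child first. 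Finally, at step $s$ the pair had $y_c^{(s-1)}=0$ and the update did not touch $c$, so $y_c^{(t)}=0$.
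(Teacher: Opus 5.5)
Your proof is correct and takes essentially the same route as the paper. In (a) you find the unique step at which $u$ gets weight, namely the else-branch on arrival of $(p,u)$, and use the failed lookahead to rule out new children of $p$ or $u$ by time $t$. In (b) you find the if-branch on arrival of $(u,c)$ and use the successful lookahead to rule out new children of $c$. You spell out details the paper leaves implicit: that no sibling of $u$ can have arrived before $(p,u)$, and that the edge from $p$ to its own parent can only set $p$ to $\frac12$ as the child endpoint, so it never pushes $y_p$ above $\frac12$.
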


\begin{proof}
For part~(\ref{obs:leaf}), note that the only way a leaf $u$ can receive a non-zero weight is upon arrival of edge $(u,p)$ and the second case of the algorithm is triggered.
Thus, at the time $t'$ when $(u,p)$ is revealed, we set $y_u^{(t')}=y_p^{(t')}=\frac12$.
Moreover, it holds at time $t'$ that $u$ is the only child of $p$ and $p$ does not get another child before $u$ gets a child.
Since $u$ is a leaf at time~$t$, it follows that, between times $t'+1$ and $t$, neither $u$ nor $p$ receives a new child.
Therefore, neither $y_u$ nor $y_p$ is updated to $1$ before time $t$, and hence $y_u^{(t)}=y_p^{(t)}=\frac12$.
Moreover, it still holds at time $t$ that $u$ is the only child of $p$.

  For part (\ref{obs:one_child}), if $y_u^{(t)}=1$, then this weight must have been assigned through execution of the first case upon arrival of edge~$(u,c)$ at time $t'$.
  Since $u$ has only one child at time $t'\leq t$, the fact that the if-condition in line~6 was triggered implies that~$u$ receives another child before~$c$ and $y_c^{(t')}=y_c^{(t'-1)}=0$.
Since~$u$ still has only one child at time $t$, no new edge incident to $c$ appears between times~$t'+1$ and~$t$.
Therefore, $y_c^{(t)}=0$.
\end{proof}

Now, we have all the prerequisites at hand to prove an upper bound on the competitive ratio of \twochild.

\begin{theorem}
$\twochild$ is $3/2$-competitive.
\end{theorem}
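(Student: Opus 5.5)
Fix a time step $t$ and write $y_v$ for $y_v^{(t)}$. I would show $\alg_t \leq \frac32 \opt_t$ by a charging argument against a maximum matching, or equivalently by partitioning the vertex set of $G_t$ into groups, each of which forces the optimum to pay a certain amount. Since $G_t$ is a forest, $\opt_t$ equals the size of a maximum matching. So it suffices to construct a matching $M$ in $G_t$ and distribute every vertex's weight onto matching edges so that each edge receives at most $\frac32$.

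First I would classify the positive-weight vertices using the structure guaranteed by the algorithm and Observation~\ref{obs:upper_bound}. Every vertex $u$ with $y_u>0$ has $y_u\in\{\frac12,1\}$, and one of the following holds.
\begin{enumerate}[(i)]
\item $u$ is a leaf in $G_t$. By part~(\ref{obs:leaf}), $y_u=\frac12$, $u$ is the only child of its parent $p$, and $y_p=\frac12$. The pair $\{u,p\}$ is then a pendant edge of weight $1$.
\item $u$ has at least one child in $G_t$.
\end{enumerate}
For each non-leaf $u$ with $y_u>0$, I would match $u$ to one of its children in $G_t$. I would choose the children greedily bottom-up so that distinct vertices get distinct children; since every internal vertex has its own children, assigning each vertex one of its own children gives a matching automatically, because children sets are disjoint. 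The edges $(u,c)$ form the matching $M$. The weight charged to edge $(u,c)$ is $y_u+y_c$, plus possibly the weight of further children of $u$ that are not matched themselves.

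The key inequality is that every matching edge $(u,c)\in M$ carries total charge at most $\frac32$. If $y_u=\frac12$, then $y_c\le 1$ and we need to check $y_c$; but $c$, if non-leaf with positive weight, is itself charged to its own edge below. So I would charge only $y_u$ plus the weight of leaf children of $u$. By part~(\ref{obs:leaf}), a positive-weight leaf child exists only when $u$ has exactly one child and $y_u=\frac12$, which gives total $1$. If $y_u=1$ and $u$ has exactly one child $c$, then part~(\ref{obs:one_child}) gives $y_c=0$, so the charge is $1$.

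The main obstacle is that this naive charging gives every edge charge at most $1$, which would imply $1$-competitiveness and cannot be right. The flaw is that a vertex $u$ with one child and weight $1$ may have a child $c$ that itself has weight (the case where $c$ has its own children), and the ``one edge per internal vertex'' matching double-uses vertices: $u$ matched to $c$ and $c$ matched to its child. The choice must genuinely be a matching. I would therefore process the tree top-down, matching $u$ to a child only if neither is already used. An unmatched positive vertex $c$ is one whose parent took it, and I would charge $y_c$ to the parent's edge. I expect the bound $\frac32$ to come from exactly these cases: the parent has weight $1$ and two children, so it is free to take an unused child, or the parent has weight $\frac12$ and the child has weight $1$. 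Verifying the second case via the rule in line~6 of Algorithm~\ref{alg:leftfirst} is the delicate step. In that case the child's weight-$1$ status forces a two-child situation lower down, which supplies a disjoint matching edge and absorbs the excess.
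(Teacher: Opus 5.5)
Your proposal has a genuine gap. The reduction is fine: any matching has size at most $\opt_t$, so a matching with charge at most $\frac32$ per edge would suffice. But you never prove that inequality; the proposal ends with what you ``expect''. Moreover, the rule you describe cannot deliver it. In your scheme each positive vertex sends its entire weight to a single matching edge, either its own edge to a child or the edge through which its parent took it.

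Here is a counterexample. Take a root $g$ with children $p_1,p_2$, where each $p_i$ eventually gets two children $u_i,w_i$, and let the edges arrive in the order $(g,p_1),(g,p_2),(p_1,u_1),(p_2,u_2),(p_1,w_1),(p_2,w_2)$. Then \twochild sets $y_g=1$ at time~$1$, since $g$ gets $p_2$ before $p_1$ gets a child. It sets $y_{p_1}=y_{p_2}=1$ at times $3$ and $4$. So $G_4$ is the path $u_1p_1gp_2u_2$ with $\alg_4=3$ and $\opt_4=2$. Your top-down rule matches $g$ to, say, $p_1$, and then $p_1$'s weight is dumped on $(g,p_1)$, which is charged $2$.

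Choosing the matching more cleverly does not help. Every matching in $G_4$ has at most two edges, so three indivisible unit weights force some edge to carry $2$. Any valid charging must split some vertex's weight across two edges. So the configuration that breaks your scheme is a weight-$1$ vertex whose children all carry weight. The case ``parent $\frac12$, child $1$'' that you flag as delicate is harmless, since it only produces charge $\frac32$.

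The missing idea is to allow weight to be split, and to drive the charging from a minimum vertex cover $C^*$ of $G_t$ rather than from the algorithm's positive vertices. Give each $v\in C^*$ a budget of $\frac32$. It sends $\frac12$ to its parent ($1$ if $y_v=0$) and, if $y_v\le\frac12$, another $\frac12$ to its first child, keeping at least $y_v$. A vertex $u\notin C^*$ has all its neighbours in $C^*$, and the cases close as follows:
\begin{itemize}
\item With at least two children, $u$ receives at least $1$.
\item With exactly one child $c$ and $y_u=1$, Observation~\ref{obs:upper_bound}\ref{obs:one_child} gives $y_c=0$, so $c$ sends $1$.
\item With one child and $y_u=\frac12$, the child $c$ sends at least $\frac12$.
\item If $u$ is a leaf with $y_u>0$, Observation~\ref{obs:upper_bound}\ref{obs:leaf} makes it the only child of a parent of weight $\frac12$; that parent lies in $C^*$ and sends $u$ its $\frac12$.
\end{itemize}
The transfer in which each child pays $\frac12$ toward a weight-$1$ parent is exactly the split that your greedy parent--child matching cannot express.
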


\begin{proof}
Fix a time step $t$ and let $G_t$ be the forest consisting of the edges that appeared up to time $t$.
Let $C^*$ be a minimum vertex cover of $G_t$, so that $|C^*|=\opt_t$.
Recall that
$
\alg_t = \sum_{v\in V} y_v^{(t)}
$
and we have to prove that $\alg_t \leq \frac{3}{2} \, \opt_t$.

Our strategy for this is as follows. We equip every $v\in C^*$ with a budget of $3/2$ and redistribute this budget among $v$ and its neighbors. We then show that every $u\in V$ receives at least a budget of $y_u^{(t)}$ from this redistribution. Note that this implies the theorem.

We now describe the redistribution. For each $v\in C^*$, the process is independent of the other vertices. In the following description, any budget that is not explicitly distributed is kept by $v$.
\begin{enumerate}
    \item If $y_v^{(t)}=1$, then $v$ gives $\frac12$ to its parent.
    \item If $y_v^{(t)}=\frac12$, then $v$ gives $\frac12$ to its parent and $\frac12$ to its first child.
    \item If $y_v^{(t)}=0$, then $v$ gives $1$ to its parent and $\frac12$ to its first child.
\end{enumerate}

It remains to show that every $u\in V$ receives at least a budget of $y_u^{(t)}$.
First, if $u\in C^*$, then $u$ keeps at least a budget of $y_u^{(t)}$ to itself in all three cases.
Therefore, assume from now on that $u\notin C^*$. In particular, all neighbors of $u$ belong to $C^*$.
We distinguish three cases: (1) where $u$ has at least two children, (2) where $u$ has no children, and (3) where $u$ has precisely one child.

For case (1), note that, in all three redistribution rules, every vertex gives at least $\frac12$ to its parent.
Since $u$ has at least two children and all of them belong to $C^*$, $u$ receives in total at least a budget of 1 from the redistribution.

Next, we consider case (2), i.e., where $u$ has no children.
If $y_u^{(t)}=0$, there is nothing to prove. Otherwise, Observation~\ref{obs:upper_bound}\ref{obs:leaf} implies that $u$ has no siblings and
$
y_u^{(t)}=y_p^{(t)}=\frac12,
$
where~$p$ is the parent of $u$. Since $p\in C^*$, it gives $\frac12$ to $u$, so $u$ is covered.

Last, we consider case (3), i.e., where $u$ has precisely one child $c$.
If $y_u^{(t)}=1$, then by Observation~\ref{obs:upper_bound}\ref{obs:one_child}, we have $y_c^{(t)}=0$. Since $c\in C^*$, it gives $1$ to its parent $u$, and thus~$u$ is covered.
If $y_u^{(t)}=\frac12$, then $c\in C^*$ gives at least $\frac12$ to its parent in all cases, so again $u$ is covered.

This exhausts all possibilities and therefore completes the proof of the theorem.
\end{proof}

It remains open what the competitive ratio of the integral version of the problem is.
For this, we formulate the following conjecture, which we verified for all instances on at most 10 vertices via exhaustive search with a computer program.

\begin{conjecture}
There exists an integral $3/2$-competitive algorithm for incremental vertex cover on forests.
\end{conjecture}

In the following, we give an example showing that a simple local rule similar to the one used for the half-integral algorithm cannot achieve the claimed competitive ratio of $3/2$.
This suggests that solving the integral version of the problem is substantially more difficult.

\begin{proposition}
\label{prop:integral_lower}
Consider an integral algorithm for incremental vertex cover on trees.
Assume that the tree is rooted and, upon arrival of edge $e=(u,v)$, the decision of the algorithm depends only on the subtree consisting of ancestors and descendants of $u$ and $v$ (and the ordering of its edges).
Then the competitive ratio of the algorithm is at least~$\frac53$.
\end{proposition}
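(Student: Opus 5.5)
The plan is to use an indistinguishability argument. The locality assumption says that when $e=(u,v)$ arrives, the algorithm's decision depends only on the subtree $T_e$ spanned by the ancestors and descendants of $u$ and $v$, together with the order of its edges. It does \emph{not} see siblings of ancestors, or the subtrees hanging off them. So I will build a family of instances on which a fixed edge $e$ has the same local view $T_e$ but a different global context, namely different side branches attached to an ancestor. On every such instance the algorithm must make the same integral choice on $e$: it puts weight $1$ on $u$, on $v$, or on both. The adversary then picks the global context adapted to that choice, in the spirit of the ``two children'' look-ahead used in \twochild, which is exactly what a local rule cannot see when the second child belongs to a sibling branch.

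Concretely, I would proceed in three steps. First, fix a small gadget: a path $a - b - c$ hanging below a vertex $x$, whose edges arrive in a fixed order, together with possible later edges below $c$. This gadget is the local view that will be replicated. Second, attach several copies of the gadget as children (or grandchildren) of a common root $r$. Each copy's local view is the chain from $r$ down through that copy, which is identical across copies because siblings are invisible; the relative edge order inside each copy is also kept identical. Hence the algorithm makes the same choice in every copy, so its cost adds up copy by copy, while $\opt_t$ can exploit sharing through the root. Third, I would split into cases according to the algorithm's (forced, identical) choice on the critical edge of the gadget: taking the parent endpoint, or taking the child endpoint. For each case, exhibit a time $t$ and a completion (which side edges are revealed in which copies, and when the root edges arrive) for which $\alg_t \geq 5\,k$ and $\opt_t \leq 3\,k$, with $k$ the number of copies.

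In the ``parent'' case, the adversary reveals a long continuation below the child, so the child must also be bought, while an optimal cover takes the grandchild-level vertices instead. In the ``child'' case, the adversary extends at the parent side via a sibling branch the algorithm could not see, so the parent must also be bought. For both cases I would verify $\opt_t$ by exhibiting an explicit vertex cover and a matching of equal size (König's theorem on trees), and bound $\alg_t$ from below using only the integrality and the forced identical choices. I would also check that an algorithm buying both endpoints is already worse at the current time.

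The main obstacle is designing the gadget so that both branches of the case analysis simultaneously reach the ratio $5/3$ at some prefix time, and so that the adversary's adaptive completion does not alter the local view of any earlier decision. Edges added later can change descendant sets, but only after the decision has been made, so this is consistent as long as the relevant edges of $T_e$ and their order are fixed up to the decision time. My first attempt would be to keep each gadget to about five vertices, where a ratio $5:3$ per gadget is natural. If a single root does not give enough sharing for $\opt$, I would nest the construction one level deeper (siblings of the grandparent), so that the hidden information lives two levels up.
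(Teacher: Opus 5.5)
\textbf{There is a genuine gap.} Your step~2 is exactly how the paper uses the locality hypothesis: identical copies under a common root have identical local views, so the algorithm must make the same choice in each copy. The gap is in step~3 and in the adaptivity built around it.

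In the incremental model the algorithm knows the entire instance. So the local view $T_e$ of $e=(u,v)$ contains all ancestors and descendants of $u,v$ in the \emph{final} tree, together with the order of all their edges, including edges that arrive after $e$. This look-ahead is what \twochild\ uses when it asks whether $p$ or $c$ gets another child first. Your claim that later edges ``change descendant sets, but only after the decision has been made'' is therefore a misreading. A completion chosen after seeing the choice on $e$ that adds edges below $u$ or $v$ changes $T_e$, and possibly the choice.

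In particular, your ``child'' case cannot be realized. Forcing the parent endpoint $u$ to be bought requires a new edge at $u$, i.e.\ a new child of $u$. That child is a descendant of $u$ and hence visible. Everything that is genuinely invisible (siblings of $u$, branches off strict ancestors) contains no edge at $u$ or $v$, so it cannot force anything there. Choosing completions that differ between copies (``which side edges are revealed in which copies'') would also break the identical local views that step~2 relies on.

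The missing idea is that nothing needs to be hidden. One fixed symmetric instance containing both threats, evaluated at two different times, suffices. The paper takes a root $a$ with children $b,c$ and the edges $ab,ac,bd,ce,df,eg,fh,gi,bj,ck$ in this order.
\begin{itemize}
\item After time $2$ the root must be in the cover; otherwise $b,c$ are, and $\alg_2\geq 2=2\,\opt_2$. Locality forces the same choice on $bd$ and $ce$.
\item If the algorithm takes $b,c$, then $df$ and $eg$ each force one more vertex. So $\alg_6\geq 5$, while the cover $\{a,d,e\}$ shows $\opt_6=3$.
\item If it takes $d,e$, then $fh,gi,bj,ck$ have pairwise disjoint endpoint sets avoiding $\{a,d,e\}$, so each forces a new vertex. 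Hence $\alg_{10}\geq 7$, while $\{b,c,f,g\}$ shows $\opt_{10}=4$.
\end{itemize}
The late leaves $j,k$ are indispensable: without them, taking $d,e$ keeps the ratio at most $3/2$ throughout. They are fully visible to the decisions on $bd$ and $ce$, but this is harmless, because the ``parent'' choice has already been punished at time~$6$.

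Your target accounting $\alg_t\geq 5k$, $\opt_t\leq 3k$ also does not describe what happens, because the shared root matters. With $k$ copies of this gadget, the two cases give $(2k+1)/(k+1)$ and $(3k+1)/(2k)$. So the bound $5/3$ is reached only for $k\in\{2,3\}$, and with many copies the ``child'' case tends to $3/2$.
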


\begin{figure}
\begin{center}
\begin{tikzpicture}[scale=0.95, every node/.style={circle, draw, inner sep=0pt, minimum size=10pt}]
\node (a) at (0,0) {};
\node (b) at (-1,-1) {};
\node (c) at (1,-1) {};
\node (d) at (-1,-2) {};
\node (e) at (1,-2) {};
\node (f) at (-1,-3) {};
\node (g) at (1,-3) {};
\node (h) at (-1,-4) {};
\node (i) at (1,-4) {};
\node (j) at (-2,-2) {};
\node (k) at (2,-2) {};
\draw (a) to node [left, draw=none,  yshift=3pt] {1} (b);
\draw (a) to node [right, draw=none,  yshift=3pt] {2} (c);
\draw (b) to node [left, draw=none] {3} (d);
\draw (c) to node [right, draw=none] {4} (e);
\draw (d) to node [left, draw=none] {5} (f);
\draw (e) to node [right, draw=none] {6} (g);
\draw (f) to node [left, draw=none] {7} (h);
\draw (g) to node [right, draw=none] {8} (i);
\draw (b) to node [left, draw=none, yshift=3pt] {9} (j);
\draw (c) to node [right, draw=none,  yshift=3pt] {10} (k);
\end{tikzpicture}
\hspace{1mm}
\begin{tikzpicture}[scale=0.95, every node/.style={circle, draw, inner sep=0pt, minimum size=10pt}]
\node (a) at (0,0) [fill, rectangle] {};
\node (b) at (-1,-1) [fill] {};
\node (c) at (1,-1) [fill] {};
\node (d) at (-1,-2) [rectangle] {};
\node (e) at (1,-2) [rectangle] {};
\node (f) at (-1,-3) [fill] {};
\node (g) at (1,-3) [fill] {};
\phantom{
\node (h) at (-1,-4) {};
\node (i) at (1,-4) {};
\node (j) at (-2,-2) {};
\node (k) at (2,-2) {};
}
\draw (a) to node [left, draw=none,  yshift=3pt] {1} (b);
\draw (a) to node [right, draw=none,  yshift=3pt] {2} (c);
\draw (b) to node [left, draw=none] {3} (d);
\draw (c) to node [right, draw=none] {4} (e);
\draw (d) to node [left, draw=none] {5} (f);
\draw (e) to node [right, draw=none] {6} (g);
\end{tikzpicture}
\hspace{1mm}
\begin{tikzpicture}[scale=0.95, every node/.style={circle, draw, inner sep=0pt, minimum size=10pt}]
\node (a) at (0,0) [fill] {};
\node (b) at (-1,-1) [fill, rectangle] {};
\node (c) at (1,-1) [fill, rectangle] {};
\node (d) at (-1,-2) [fill] {};
\node (e) at (1,-2) [fill] {};
\node (f) at (-1,-3) [rectangle] {};
\node (g) at (1,-3) [rectangle] {};
\node (h) at (-1,-4) [fill] {};
\node (i) at (1,-4) [fill] {};
\node (j) at (-2,-2) {};
\node (k) at (2,-2) {};
\draw (a) to node [left, draw=none,  yshift=3pt] {1} (b);
\draw (a) to node [right, draw=none,  yshift=3pt] {2} (c);
\draw (b) to node [left, draw=none] {3} (d);
\draw (c) to node [right, draw=none] {4} (e);
\draw (d) to node [left, draw=none] {5} (f);
\draw (e) to node [right, draw=none] {6} (g);
\draw (f) to node [left, draw=none] {7} (h);
\draw (g) to node [right, draw=none] {8} (i);
\draw (b) to node [left, draw=none, yshift=3pt] {9} (j);
\draw (c) to node [right, draw=none,  yshift=3pt] {10} (k);
\end{tikzpicture}
\end{center}
\caption{Example of a symmetric instance of incremental vertex cover where no symmetric integral solution achieves a competitive ratio of $3/2$.
The left subfigure illustrates the instance and the other two subfigures illustrate the two possible symmetric solutions. The filled vertices are the ones chosen by the incremental algorithm, and the square vertices denote the offline optimum at times 6 and 10, respectively.}
\label{fig:integral_example}
\end{figure}

\begin{proof}
Consider the example depicted on the left in Figure~\ref{fig:integral_example}, where the root vertex $r$ is the vertex depicted at the top.
Observe that the two subtrees rooted at the two children of $r$ are equivalent in the sense that they result in the same tree and their edges appear in the same order.
However, we argue that any integral algorithm that follows the same strategy in both subtrees has competitive ratio at least $\frac53$.

At time $t=2$, the only integral solution with competitive ratio less than 2 is to select the root vertex, so we can assume that the algorithm does so.
At times 3 and 4, the algorithm needs to decide whether it selects the parent vertex or child vertex. We assume that the algorithm decides for the same in the two subtrees.
If it selects in both the parent vertex (see second subfigure), at times 5 and 6, it needs to select an additional vertex each (in the figure, we have decided wlog.~for the child vertex). Then, at time 6, we have $\alg_6=5$, while $\opt_6=3$.
Therefore, assume that, at times 3 and~4, the algorithm selects in both subtrees the child vertex (see third subfigure). At times 7 and 8, and at times 9 and 10, an additional vertex must be selected each. 
Then, at time 10, we have $\alg_{10}=7$, while $\opt_{10}=4$.
To summarize, in either case, the competitive ratio of the algorithm is at least $\frac53$.
\end{proof}

However, it is not difficult to see that an ``asymmetric'' $3/2$-competitive integral solution of the example presented in Proposition~\ref{prop:integral_lower} exists (see Figure~\ref{fig:asymmetric_solution}).

\begin{figure}
\begin{center}
\begin{tikzpicture}[scale=0.95, every node/.style={circle, draw, inner sep=0pt, minimum size=11pt}]
\node (a) at (0,0)  {1};
\node (b) at (-1,-1)  {};
\node (c) at (1,-1) {4};
\node (d) at (-1,-2) {3};
\node (e) at (1,-2) {};
\node (f) at (-1,-3)  {};
\node (g) at (1,-3) {6};
\node (h) at (-1,-4) {7};
\node (i) at (1,-4)  {};
\node (j) at (-2,-2) {9};
\node (k) at (2,-2) {};
\draw (a) to node [left, draw=none,  yshift=3pt] {1} (b);
\draw (a) to node [right, draw=none,  yshift=3pt] {2} (c);
\draw (b) to node [left, draw=none] {3} (d);
\draw (c) to node [right, draw=none] {4} (e);
\draw (d) to node [left, draw=none] {5} (f);
\draw (e) to node [right, draw=none] {6} (g);
\draw (f) to node [left, draw=none] {7} (h);
\draw (g) to node [right, draw=none] {8} (i);
\draw (b) to node [left, draw=none, yshift=3pt] {9} (j);
\draw (c) to node [right, draw=none,  yshift=3pt] {10} (k);
\end{tikzpicture}
\end{center}
\caption{An asymmetric integral $3/2$-competitive solution to the instance presented in Proposition~\ref{prop:integral_lower}. The vertex cover for time $t$ consists of the vertices with a label $\leq t$.}
\label{fig:asymmetric_solution}
\end{figure}

\subsection{A lower bound of 1.5 for fractional incremental vertex cover on trees}

In this subsection, we prove that the algorithm \twochild is optimal by establishing a matching lower bound.

\begin{theorem}\label{thm:lower-bound}
Every algorithm for fractional incremental vertex cover on trees has a competitive ratio of at least $3/2$.
\end{theorem}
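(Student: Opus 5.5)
We use the hard instance suggested in the paper: a complete binary tree $T_k$ of depth $k$ (root at depth $0$), revealed layer by layer. First all edges between depth $0$ and $1$ arrive, then all edges between depth $1$ and $2$, and so on. Let $\tau_i$ be the time at which layer $i$ (the edges joining depths $i-1$ and $i$) is complete; then $G_{\tau_i}$ is a complete binary tree of depth $i$ plus isolated vertices. A minimum vertex cover of a complete binary tree of depth $i$ is obtained greedily from the leaves: take depths $i-1, i-3, \dots$. Hence
\[
\opt_{\tau_i} = 2^{i-1}+2^{i-3}+\dots = \tfrac{2}{3}2^{i}+O(1).
\]
We compare an arbitrary algorithm only at the times $\tau_1,\dots,\tau_k$. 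If that is not strong enough, the plan is to also use the midpoints of layers, where half of the new edges have arrived.

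Next we compress the algorithm into few variables. For a $\compratio$-competitive algorithm, set
\[
a_j^{(i)} := 2^{-j}\sum_{\mathrm{depth}(v)=j} y_v^{(\tau_i)} \qquad (0\le j\le i\le k),
\]
the average weight of depth $j$ at time $\tau_i$. We derive three families of linear constraints.

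\begin{enumerate}
\item \emph{Covering.} Sum $y_p+y_c\ge 1$ over the $2^j$ edges of layer $j\le i$. Each depth-$(j-1)$ vertex is the parent of exactly two of these edges, so the sum equals $2^j a_{j-1}^{(i)}+2^j a_j^{(i)}$. This gives $a_{j-1}^{(i)}+a_j^{(i)}\ge 1$.
\item \emph{Monotonicity.} $a_j^{(i)}\le a_j^{(i+1)}$, and $a_j^{(i)}\ge 0$.
\item \emph{Competitiveness.} $\sum_{j\le i}2^j a_j^{(i)}\le \compratio\,\opt_{\tau_i}$.
\end{enumerate}

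We also note $a_j^{(i)}\le 1$. No symmetrization argument is needed, because averaging preserves all of these linear constraints.

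It remains to show that this LP is infeasible for $\compratio<3/2$ once $k$ is large. Following the paper's hint, we do this via LP duality (Farkas' lemma). We want nonnegative multipliers $\lambda_i$ on the cost constraints, $\mu_{i,j}$ on the covering constraints and $\nu_{i,j}$ on the monotonicity constraints, such that each variable's coefficient cancels or has the right sign. The combination should yield
\[
\sum_j\mu_{i,j}\le \compratio\sum_i\lambda_i\opt_{\tau_i},
\]
while $\sum_j\mu_{i,j}\ge (\tfrac32-o(1))\sum_i\lambda_i\opt_{\tau_i}$.

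The intuition guiding the choice of multipliers is as follows. At time $\tau_i$ a cheap solution wants depth $i-1$ at weight $1$ and depth $i$ at $0$. At time $\tau_{i+1}$ the reverse pattern is wanted, and monotonicity forbids undoing weight. The stationary compromise is all depths at $\tfrac12$, which costs $\approx 2^i = \tfrac32\opt_{\tau_i}$. I would guess the multipliers from this extremal solution via complementary slackness:
\begin{itemize}
\item $\lambda_i$ geometric in $i$, with ratio $2^{-1}$ after normalizing by $\opt_{\tau_i}$, so every time step contributes comparably;
\item covering multipliers $\mu_{i,j}$ proportional to $2^j\lambda_i$ on the deepest layers only;
\item $\nu$ chosen to transport the coefficient of $a_j^{(i)}$ forward in time until it cancels.
\end{itemize}
I would verify these multipliers first numerically for small $k$, by solving the LP, and then in closed form, taking the limit $k\to\infty$ so that the $O(1)$ terms in $\opt$ and the boundary effects at depth $0$ and time $k$ vanish.

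Finding and verifying this dual certificate is the main obstacle. If the layer-completion times alone only give a bound below $3/2$, I would refine the instance. Inside each layer, the edges would be revealed so that, midway through, half the deepest vertices already have a child. This adds constraints that punish an algorithm for putting weight on a parent too early. The variables would then be refined to averages over the two halves of each layer, and the same duality argument would be repeated.
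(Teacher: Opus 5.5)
Your setup is the right one and is essentially the paper's: the same layer-by-layer binary tree, and an averaged LP that is at least as strong as the paper's. The paper keeps only $\alpha_i=a_i^{(i)}$ and $\beta_i=a_i^{(i+1)}$, and uses monotonicity to bound the cost at $\tau_i$ from below by $2^i\alpha_i+\sum_{j<i}2^j\beta_j$. The genuine gap is that you stop exactly where the content of the theorem lies. You never exhibit the dual (Farkas) multipliers, only their rough shape. You propose to find them numerically, and you even hedge that the layer-completion times might not suffice. As written, nothing shows that your LP is infeasible for $\rho<\frac{3}{2}$, so this is a plan rather than a proof.

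The missing certificate is short. Put $x_i=a_i^{(i)}$ and $z_j=a_j^{(j+1)}$. Chaining monotonicity gives $a_j^{(i)}\ge z_j$ for $j<i$, and your formula gives $\opt_{\tau_i}\le 2^{i+1}/3$. So the cost constraint at $\tau_i$ implies
\[
\frac{3}{2}x_i+\sum_{j<i}\frac{3}{2^{i+1-j}}z_j\le\rho,\qquad i=1,\dots,k.
\]
Multiply it by $c_i=\frac{1}{k+1}$ for $i<k$ and by $c_k=\frac{2}{k+1}$ (so $\sum_i c_i=1$), and add.
\begin{itemize}
\item A geometric sum shows that every $z_j$, $0\le j\le k-1$, gets coefficient exactly $\frac{3}{2(k+1)}$; doubling the last weight compensates for the truncated tail.
\item Every $x_i$ gets coefficient $\frac{3}{2}c_i\ge\frac{3}{2(k+1)}$.
\end{itemize}
Using nonnegativity and only the newest covering constraints $z_{i-1}+x_i\ge 1$, you get $\rho\ge\frac{3}{2(k+1)}\sum_{i=1}^{k}(z_{i-1}+x_i)\ge\frac{3k}{2(k+1)}\to\frac{3}{2}$.

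This agrees with your heuristics: unnormalized $\lambda_i\propto 2^{-i}$, uniform covering multipliers on the deepest layer only, and monotonicity used only to push $a_j^{(i)}$ back to $a_j^{(j+1)}$. The constraint $a_j^{(j)}\le a_j^{(j+1)}$ is not even needed, and no intra-layer refinement is required. But the certificate has to be written down and verified for the proof to be complete.
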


\begin{proof}
The instance that we consider is a rooted binary tree of some depth $d$, which is revealed layer-by-layer.
More precisely, \emph{layer $i$} denotes the set of vertices at distance $i$ from the root and, in \emph{phase $i$}, we reveal all edges between layer $i-1$ and $i$. When we refer to \emph{phase 0}, we mean the time prior to arrival of any edge.

Let $\alg$ be an algorithm for the incremental fractional vertex cover problem. 
For the analysis, we only consider the time steps at the end of a phase.
For $i=1, \dots, d$, let $\opt_i$ denote the weight of an optimal offline vertex cover after phase $i$ and let $\alg_i$ denote the cost of the algorithm after phase $i$.

We begin by estimating $\opt_i$. For this, note that equipping the vertices of layers $i-1, i-3, i-5, \dots$ with weight 1 and all other vertices with weight 0 yields a valid fractional vertex cover. For $i$ odd, its total weight is at most
\begin{equation*}
\opt_i \leq 2^{i-1} + \dots + 1 = \sum\limits_{j=0}^{(i-1)/2} 4^j = \frac{4^{(i+1)/2}-1}{4-1} \leq \frac{2^{i+1}}{3}.
\end{equation*}
For $i$ even, its total weight is at most
\begin{equation*}
\opt_i \leq 2^{i-1} + \dots + 2
= 2 \cdot  \sum\limits_{j=0}^{(i-2)/2} 4^j = 2 \cdot \frac{4^{i/2}-1}{4-1} \leq \frac{2^{i+1}}{3}.
\end{equation*}
In any case, we have
\begin{equation}\label{eq:opt}
\opt_i\leq \frac{2^{i+1}}{3}.
\end{equation}

Next, we analyze the behavior of $\alg$. For this, let $\alpha_i$ be the average weight of vertices in layer $i$ after phase $i$ ($i\in \{0, \dots, d\}$), and let $\beta_i$ be the average weight of vertices in layer $i$ after phase $i+1$ ($i\in \{0, \dots, d-1\}$). Note that, after phase $i+1$, the weights of the vertices in layer $i$ are not further increased as no new adjacent edges are revealed.

We derive some inequalities that $\alg$ must fulfill. First, incrementality ensures that
\begin{equation}\label{eq:incremental}
\alpha_i \leq \beta_i \text{ for all } i\in \{0, \dots, d-1\}.
\end{equation}
Next, after phase $i+1$, the edges between layer $i$ and $i+1$ need to be covered. This implies
\begin{equation}\label{eq:covering}
\beta_i + \alpha_{i+1} \geq 1 \text{ for all } i\in \{0, \dots, d-1\}.
\end{equation}
To see this, note that, after phase $i+1$, the total weight of vertices in layer $i$ is $2^i \beta_i$ and the total weight of vertices in layer $i+1$ is $2^{i+1}\alpha_{i+1}$. Each of the vertices in layer $i$ contributes its weight to two edges between layer $i$ and $i+1$ and each of the vertices in layer $i+1$ contributes its weight to one such edge. Therefore, the edges between layer $i$ and $i+1$ get a total weight of $2\cdot 2^{i} \beta_i+2^{i+1}\alpha_{i+1}$. Since there are $2^{i+1}$ such edges, we obtain $\beta_{i}+\alpha_{i+1}\geq 1$.

Next, we have
\begin{equation}\label{eq:alg}
\alg_i=2^i \alpha_i + \sum\limits_{j=0}^{i-1}2^j \beta_j \text{ for all } i\in \{1, \dots, d\}.
\end{equation}

Let $\compratio$ denote the competitive ratio of $\alg$. We obtain
\begin{equation}\label{eq:comp}
\compratio\geq \frac{\alg_i}{\opt_i}
\overset{\eqref{eq:opt} \eqref{eq:alg}}{\geq} \frac{2^i \alpha_i + \sum\limits_{j=0}^{i-1}2^j \beta_j}{2^{i+1}/3}
= \frac{3}{2}\alpha_i + \sum\limits_{j=0}^{i-1} \frac{3}{2^{i+1-j}} \beta_j
 \text{ for all } i\in \{1, \dots, d\}.
\end{equation}

Putting inequalities \eqref{eq:incremental}, \eqref{eq:covering}, \eqref{eq:comp} together, we obtain that the competitive ratio of $\alg$ is at least the optimal objective value of the following linear program
\begin{align*}
\mathrm{(LP)}\qquad  \min \quad& \compratio \\
  \text{s.t.}  \quad& \beta_i-\alpha_i\geq 0\\
& \alpha_{i+1}+\beta_i \geq 1\\
& \compratio -\frac{3}{2}\alpha_{i+1} - \sum\limits_{j=0}^{i} \frac{3}{2^{i+2-j}} \beta_j \geq 0 \quad (i\in \{0, \dots, d-1\})\\
& \compratio, \alpha_0, \dots, \alpha_d, \beta_0, \dots, \beta_{d-1} \geq 0.
\end{align*}

Next, we show a lower bound on the objective value for (LP) (and, hence, a lower bound for the competitive ratio of $\alg$) by giving a solution for its dual, i.e., using weak duality. For this, we first formulate the dual of (LP). We denote the dual variables for the first constraint by $a_0, \dots, a_{d-1}$, for the second constraint by $b_0, \dots, b_{d-1}$, and for the last constraint by $c_0, \dots, c_{d-1}$, using the convention $a_d := 0$. Then the dual is given by

\begin{align*}
\mathrm{(D)}\qquad  \max \quad& \sum\limits_{i=0}^{d-1} b_i\\
\text{s.t.}  \quad& -a_0\leq 0\\
& - a_{i}+b_{i-1} - \frac{3}{2}c_{i-1} \leq 0 \quad (i\in \{1, \dots, d\})\\
& a_i + b_i  - \sum\limits_{j=i}^{d-1} \frac{3}{2^{j+2-i}} c_j \leq 0 \quad (i\in \{0, \dots, d-1\})\\
& \sum\limits_{j=0}^{d-1} c_j \leq 1\\
& a_i, b_i, c_i \geq 0 \quad (i\in \{0, \dots, d-1\}).
\end{align*}

Set $a_0= \dots = a_{d-1}=0, b_0=\dots=b_{d-1}=\frac{3}{2(d+1)}, c_0=\dots=c_{d-2}=\frac{1}{d+1}, c_{d-1}=\frac{2}{d+1}$.
We now show that this is a feasible solution. For the third inequality, we have for $i \in \{0, \dots, d-2\}$
\begin{align*}
a_i + b_i  - \sum\limits_{j=i}^{d-1} \frac{3}{2^{j+2-i}} c_j
&= \frac{3}{2(d+1)} -\sum\limits_{j=i}^{d-2} \frac{3}{2^{j+2-i}} \frac{1}{d+1} -\frac{3}{2^{d+1-i}}\frac{2}{d+1}\\
&= \frac{3}{2(d+1)} \left( 1- \sum\limits_{j=i}^{d-2} \frac{1}{2^{j+1-i}} - \frac{2}{2^{d-i}} \right)\\
&= \frac{3}{2(d+1)} \left( 1- \sum\limits_{j=1}^{d-1-i} \frac{1}{2^{j}} - \frac{1}{2^{d-1-i}} \right) =0.
\end{align*}
For $i=d-1$, we have in the third inequality
\begin{equation*}
a_i + b_i  - \sum\limits_{j=i}^{d-1} \frac{3}{2^{j+2-i}} c_j= \frac{3}{2(d+1)} - \frac{3}{2^2} \frac{2}{d+1}=0.
\end{equation*}
Hence, the third inequality is fulfilled. For all other inequalities, it is immediate that they are fulfilled.

Therefore, the described solution is feasible and its objective value is
\begin{equation*}
\sum\limits_{i=0}^{d-1} b_i = \frac{3d}{2(d+1)}.
\end{equation*}

Therefore, we obtain that the competitive ratio of $\alg$ on a binary tree of depth $d$ is at least $\frac{3d}{2(d+1)}$. Letting $d\to\infty$ yields the statement of the Theorem.
\end{proof}

\begin{remark}
It is worth noting that our lower bound construction also applies to the \emph{vertex arrival model}, where an ordering of the vertices $V=\{v_1, \dots, v_n\}$ is given and, in time step~$t$, all edges between $v_t$ and $v_1, \ldots, v_{t-1}$ are revealed.
Furthermore, our construction also applies to trees (not only forests), meaning that, at every time step, the graph induced by the revealed edges is connected.
\end{remark}

\bibliography{paper_actual}

\end{document}